\newtheorem{theorem}{Theorem}
\begin{document}
\topmargin=-0.6in \oddsidemargin -0.5in \textwidth=7.4in
\textheight=9.5in

\title{\huge{Lifetime Improvement of Wireless Sensor Networks by Collaborative Beamforming and Cooperative Transmission}}
\author{Zhu Han$^*$ and H. Vincent Poor$^+$\\
$^*$Department of Electrical and Computer Engineering, Boise State
University, Idaho, USA\\
$^+$School of Engineering and Applied Science, Princeton University,
New Jersey, USA.\vspace{-10mm} \thanks{This research was supported
by the National Science Foundation under Grants ANI-03-38807 and
CCR-02-05214.}}

\maketitle

\begin{abstract}

Extending network lifetime of battery-operated devices is a key
design issue that allows uninterrupted information exchange among
distributive nodes in wireless sensor networks. Collaborative
beamforming (CB) and cooperative transmission (CT) have recently
emerged as new communication techniques that enable and
leverage effective resource sharing among
collaborative/cooperative nodes. In this paper, we seek to
maximize the lifetime of sensor networks by using the new idea
that closely located nodes can use CB/CT to reduce the load or
even avoid packet forwarding requests to nodes that have
critical battery life. First, we study the effectiveness of CB/CT
to improve the signal strength at a faraway destination using
energy in nearby nodes. Then, a 2D disk case is analyzed to assess the
resulting performance improvement. For general networks, if
information-generation rates are fixed, the new routing problem is
formulated as a linear programming problem; otherwise, the cost
for routing is dynamically adjusted according to the amount of
energy remaining and the effectiveness of CB/CT. From the analysis and
simulation results, it is seen that the proposed schemes can
improve the lifetime by about 90\% in the 2D disk network and by
about 10\% in the general networks, compared to existing
schemes.\vspace{-5mm}
\end{abstract}

\section{Introduction}\label{sec:intro}

In wireless sensor networks \cite{Akyildiz1}, extending the
lifetime of battery-operated devices is considered a key design
issue that increases the capability of uninterrupted information
exchange and alleviates the burden of replenishing batteries.  In
\cite{Tassiulas}, a data routing algorithm has been proposed with
an aim to maximize the minimum lifetime over all nodes in wireless
sensor networks. A survey of energy constraints for sensor networks
has been studied in \cite{Ephremides}. In \cite{Yates1}, the
network lifetime has been maximized by employing the accumulative
broadcast strategy. The work in \cite{Midkiff1} has considered
provisioning additional energy in the existing nodes and deploying
relays to extend the lifetime.

Recently, collaborative beamforming (CB) \cite{CB} and cooperative
transmission (CT) \cite{bib:Aazhang1}\cite{bib:Laneman2} have been
proposed  communication techniques that fully utilize spatial
diversity and multiuser diversity. While most existing work in this area concentrates
on improving the performances at the physical layer,
 CB and CT also have impact on the design of higher layer
protocols. In this paper, we investigate new routing protocols to
improve the lifetime of wireless sensor networks using these two
techniques.

First, we study the fact that CB/CT can effectively increase the
signal strength at a destination node, which in turn can increase the
transmission range. We obtain a closed-form analysis of the
effectiveness of CT similar to that given for CB in \cite{CB}. Then, we
formulate the problem as a maximization of the network lifetime, defined
until the time of the first node failure. The {\em new idea} is that closely located
nodes can use CB/CT to reduce the loads or even avoid packet
forwarding requests to nodes with critical battery lives.
From the analysis of a 2D disk case using CB/CT, we investigate
how  battery-depleting nodes close to the sink can be bypassed. Then
we propose algorithms for a general network situation. If the
information-generation rates are fixed, we can formulate the
problem as a linear programming problem. Otherwise, we propose a
heuristic algorithm to dynamically update costs in the routing
table according to the remaining energy and effectiveness of
collaboration. From the analysis and simulation results, the
proposed new routing schemes can improve the lifetime by about
90\% in the 2D disk network compared to the pure packet forwarding
scheme, and by about 10\% in general networks, compared to the
schemes in \cite{Tassiulas}.

This paper is organized as follows: In Section \ref{sec:model},
the system model is given, and the abilities of CB/CT to enhance the
destination signal strength are studied. In Section
\ref{sec:prob_def}, we formulate the lifetime maximization
problem, analyze a 2D disk case, and propose algorithms for
 general network situations. Simulation results are given in
Section \ref{sec:simulation} and conclusions are drawn in Section
\ref{sec:conclusion}.

\vspace{-3mm}
\section{System Model and Effectiveness of CB/CT}\label{sec:model}

We assume that a group of sensors is uniformly distributed with a density
of $\rho$. Each node is equipped with a single ideal isotropic
antenna. There is no power control for each node, i.e., the node
transmits with power either $P$ or $0$. There is no reflection or
scattering of the signal. Thus, there is no multipath fading or
shadowing. The nodes are sufficiently separated that any mutual
coupling effects among the antennas of different nodes are
negligible.

For traditional direct transmission, a node tries to reach another
node at a distance of $A$. The signal to noise ratio (SNR) is
given by\vspace{-1mm}
\begin{equation}
\Gamma=\frac {PC_0A^{-\alpha}|h|^2}{\sigma ^2},\vspace{-1mm}
\end{equation}
where $C_0$ is a constant that incorporates  effects such as
antenna gains, $\alpha$ is the propagation loss factor, $h$ is the
channel gain, and $\sigma^2$ is the thermal noise level. We define
the energy cost of such a transmission for each packet to be one
unit.

\begin{figure}[htbp]
\begin{center}
    \epsfig{file=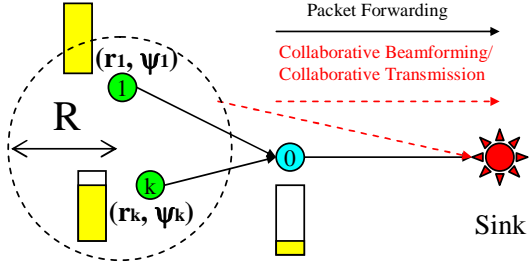,width=70truemm}
\end{center}
\caption{System Model}\label{system_model}\vspace{-5mm}
\end{figure}

In Figure \ref{system_model}, we show the system model with CB/CT.
In traditional sensor networks, the only choice a node has is to
forward its information toward the sink. This will deplete the
energy of the nodes near the sink, since they have to transmit many
 other nodes' packets. To overcome this problem, in this
paper, we propose another choice for a node consisting of forming CB/CT with
the nearby nodes so as to transmit further towards the sink. By
doing this, we can balance the energy usage of the nodes having
different locations and different remaining energy. In the rest of
this section, we study how effectively CB and CT can improve the
link quality. \vspace{-2mm}


\subsection{Effectiveness of Collaborative Beamforming}

Suppose there are a total of $N$ users for collaborative
beamforming within a disk of radius $R$. We have\vspace{-1mm}
\begin{equation}
N=\lfloor \rho \pi R^2\rfloor.\label{N_R}\vspace{-1mm}
\end{equation}
Each node has polar coordinate $(r_k,\psi_k)$ to the disk center.
The distance from the center to the beamforming destination is
$A$. The Euclidean distance between the $k^{th}$ node and the
beamforming destination can be written as:
\begin{equation}
d_k=\sqrt{A^2+r_k^2-2r_kA\cos(\phi-\psi_k)},
\end{equation}
where $\phi$ is azimuth direction and is assumed to be a constant.
By using loop control or the Global Positioning System, the initial
phase of node $k$ can be set to\vspace{-1mm}
\begin{equation}
\psi _k=-\frac {2\pi}{\lambda} d_k(\phi),\vspace{-1mm}
\end{equation}
where $\lambda$ is the wavelength of the radio frequency carrier.

Define $\textbf z=[z_1\dots z_N]^T$ with\vspace{-1mm}
\begin{equation}
z_k=\frac {r_k} R \sin ( \psi_k-\phi/2).
\end{equation}
The array factor of CB can be written as\vspace{-1mm}
\begin{equation}
F(\phi|\textbf z)=\frac 1 N \sum_{k=1}^N e^{-j4\pi R \sin(\frac
\phi 2)z_k/\lambda}.\vspace{-1mm}
\end{equation}
The far-field beam pattern can be defined as:\vspace{-1mm}
\begin{eqnarray}
P(\phi|\textbf z)&=&|F(\phi|\textbf z)|^2\\
&=&\frac 1 N +\frac 1{N^2} \sum_{k=1}^N e^{-ja(\phi)z_k}
\sum_{l\neq k} e^{ja(\phi)z_l},\nonumber\vspace{-1mm}
\end{eqnarray}
where\vspace{-1mm}
\begin{equation}
a(\phi)=\frac {4\pi R \sin \frac \phi 2}{\lambda}.
\end{equation}

Define the directional gain $D_{av}^{CB}$ as the ratio of radiated
concentrated energy in the desired direction over that of a single
isotropic antenna. From Theorem 1 in \cite{CB}, for large $\frac R
\lambda$ and $N$, the following lower bound for far-field
beamforming is tightly held:
\begin{equation}
    \frac{D_{av}^{CB}} N \geq \frac 1 {1+\mu\frac
    {N\lambda}{R}},\label{bound}
\end{equation}
where $\mu\approx 0.09332$. 


Considering this directional gain, we can improve the direct
transmission by a factor of $D_{av}^{CB}$. Notice that this
transmission distance gain for one transmission is obtained at the
expense of consuming a total power of $N$ units from the nearby
nodes. \vspace{-3mm}

\subsection{Effectiveness of Cooperative Transmission}

Similar to the CB case, we assume $N$ users are uniformly
distributed over a radius of $R$. The probability density function
of the users' radial coordinate $r$ is given by
\begin{equation}
q(r)=\frac {2r} {NR^2}, \ \ 0\leq r \leq R,
\end{equation}
and the users' angular coordinate $\psi$ is uniformly distributed between
$[0,2\pi)$.

Suppose at the first stage, node $1$ transmits to the next hop or
sink. Then in the following stages, node $2$ to node $N$ relay the
node $1$'s information if they decode it correctly. The received
signals at node $2$ to node $k$ at stage 1 can be expressed as:
\begin{equation}
z_k=\sqrt{P r_k^{-\alpha}}h_k^r x+n_k^r,\ k=2,\dots N,
\end{equation}
and the received signals at the destination in the following
stages are
\begin{equation}
y_k=\sqrt{P d_k^{-\alpha}}h_k x+n_k.
\end{equation}
Here $P$ is the transmitted power, $h_k^r$ and $h_k$ are the
channel gains of source-relay and relay-destination, which are
modeled as independent zero mean circularly symmetric complex
Gaussian random variables with unit variance, $x$ is the
transmitted data having unit power, and $n_k$ and $n_k^r$ are
independent thermal noises with noise variance $\sigma ^2$.

\begin{theorem}
Define $D_{av}^{CT}$ to be the energy enhancement at the
destination node due to CB. Under the far-field condition and the
assumption that channel links between source and relays are
sufficiently good, we have the following approximation:
\begin{equation}\label{approx_CC}
\frac {D_{av}^{CT}}{N}\approx \frac {1+(N-1)\ _2 F_1\left(\frac 2
\alpha, -L; \frac {\alpha+2}\alpha ; \frac
{\sigma^2R^\alpha}{4P}\right)}{N},
\end{equation}
where $L$ is the frame length and $_2 F_1$ is the Hypergeometric
function
\begin{equation}
_2 F _1 (a,b;c;z)=\sum_{n=0}^{\infty} \frac {(a)_n(b)_n}{(c)_n}
\frac {z^n}{n!},
\end{equation}
where $(a)_n=a(a+1)\cdots(a+n-1)$ is the Pochhammer symbol.
\end{theorem}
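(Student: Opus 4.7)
The plan is to split $D_{av}^{CT}$ into the source contribution and $N-1$ relay contributions, collapse each relay's geometry to a direct-transmission equivalent by invoking the far-field approximation, and then recognize the resulting average over relay positions as an Euler integral representation of a Gauss hypergeometric function.

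Concretely, I would normalize units so that a single source-to-destination transmission delivers one unit of energy at the destination. The source always transmits, which produces the leading $1$ in the numerator of the claimed expression. Each of the remaining $N-1$ relays adds one further unit exactly when it correctly decodes the $L$-symbol frame in Stage 1 and therefore retransmits in a subsequent stage. Under the far-field hypothesis $A\gg R$, the distance $d_k=\sqrt{A^2+r_k^2-2r_kA\cos(\phi-\psi_k)}$ collapses to $d_k\approx A$, so relay $k$'s retransmission reaches the destination with essentially the same path loss as the source's direct transmission. Consequently
\begin{equation}
\frac{D_{av}^{CT}}{N}\approx\frac{1+(N-1)\,p_{\mathrm{dec}}}{N},
\end{equation}
where $p_{\mathrm{dec}}$ is the average probability that a typical relay successfully decodes the frame.

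Next I would evaluate $p_{\mathrm{dec}}$ using the ``sufficiently good'' source-relay assumption. A relay at radial distance $r$ from the source sees mean link SNR $\bar\gamma(r)=Pr^{-\alpha}/\sigma^2$; at high SNR the per-symbol error probability for Rayleigh fading behaves like $1/(4\bar\gamma(r))=\sigma^2 r^\alpha/(4P)$, and treating the $L$ symbol-error events as independent gives a per-frame success probability $\bigl(1-\sigma^2 r^\alpha/(4P)\bigr)^L$. Averaging against the radial density $q(r)=2r/R^2$ on $[0,R]$ and substituting $v=(r/R)^\alpha$ yields
\begin{equation}
p_{\mathrm{dec}}=\frac{2}{\alpha}\int_0^1 v^{2/\alpha-1}\left(1-\frac{\sigma^2R^\alpha}{4P}\,v\right)^{L}\!dv.
\end{equation}

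Finally I would match this integral against Euler's representation
\begin{equation}
{}_2F_1(a,b;c;z)=\frac{\Gamma(c)}{\Gamma(b)\Gamma(c-b)}\int_0^1 t^{b-1}(1-t)^{c-b-1}(1-zt)^{-a}\,dt,
\end{equation}
with $a=-L$, $b=2/\alpha$, $c=(\alpha+2)/\alpha$, so that the $(1-t)$ factor drops out; the Gamma ratio then simplifies to $2/\alpha$, which cancels the $2/\alpha$ prefactor and leaves exactly ${}_2F_1(-L,\,2/\alpha;\,(\alpha+2)/\alpha;\,\sigma^2R^\alpha/(4P))$, and the symmetry ${}_2F_1(a,b;c;z)={}_2F_1(b,a;c;z)$ recovers the precise form stated in the theorem. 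The main obstacle, and the place where the word ``approximation'' is doing real work, is this symbol-error step: the bound $1/(4\bar\gamma)$ is only sharp in the high-SNR regime, and treating the $L$ symbol-level fades as independent tacitly presumes fast fading or interleaving across the frame, so these modeling assumptions must be explicitly identified with the ``sufficiently good'' source-relay hypothesis for the chain of approximations to be clean.
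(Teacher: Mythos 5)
Your argument matches the paper's proof essentially step for step: the source contributes the leading $1$, the far-field collapse $d_k\approx A$ reduces each relay to a direct-transmission equivalent, the per-frame success probability becomes $\bigl(1-\sigma^2 r^\alpha/(4P)\bigr)^L$, and averaging over the radial density followed by the substitution $t=(r/R)^\alpha$ yields the Euler-integral form of ${}_2F_1$ with the same parameter identification. The only cosmetic difference is that the paper starts from the exact Rayleigh-averaged BPSK expression $\bigl(\tfrac12+\tfrac12\sqrt{P/(P+\sigma^2 r^\alpha)}\bigr)^L$ and linearizes it under the ``sufficiently good channel'' assumption, whereas you invoke the high-SNR asymptote $1/(4\bar\gamma)$ directly; the two give the identical approximation.
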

\begin{proof}
The SNR received by the $k^{th}$ user at stage one can be written
as
\begin{equation}
\Gamma_k=\frac {PC_0r_k^{-\alpha}|h_k^r|^2}{\sigma ^2},
\end{equation}
where $|h_k^r|^2$ is the magnitude square of the channel fade and
follows an exponential distribution with unit mean.

\begin{figure}[htbp]
\begin{center}
    \epsfig{file=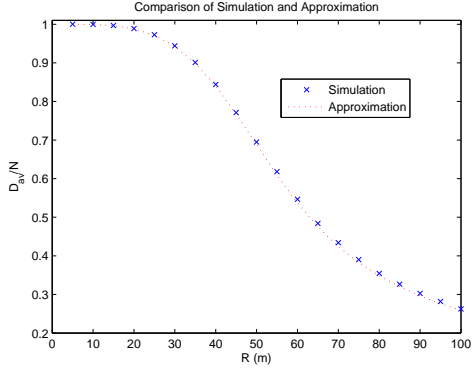,width=70truemm}
\end{center}
\caption{Approximation of CB
Effectiveness}\label{CC_compare}\vspace{-5mm}
\end{figure}

Without loss of generality, we suppose that BPSK modulation is used and
$C_0=1$. The probability of successful transmission of the packet with length $L$
is given by:\vspace{-1mm}
\begin{equation}\label{P_suc}
P_r^k(r)=\left( \frac 1 2 + \frac 1 2 \sqrt {\frac {P}{P+\sigma
^2r^{\alpha}_k}}\right) ^L.
\end{equation}

For fixed $(r_k,\psi_k)$, the average energy that arrives at the
destination can be written as:
\begin{equation}
D^{CT}=\sum_{k=1}^N P d_k^{-\alpha}P_r^k.
\end{equation}
Since for node $1$, $r_1=0$. We can write the average energy gain
in the following generalized form:
\begin{equation}\label{D_c_av_1}
D^{CT}_{av}=\sum_{k=1}^N \int _0 ^R \int _0 ^{2\pi} \frac
{A^\alpha} {2\pi} d_k^{-\alpha} P_r^k (r_k) q(r_k) dr_k d\psi _k.
\end{equation}

Since each user is independent of the others, we omit the notation
$k$ and can rewrite (\ref{D_c_av_1}) as:
\begin{eqnarray}
D^{CT}_{av}=1+(N-1)\\ \int _0 ^{2\pi} \int _0^R \frac {2r}
{A^{-\alpha}R^2} (A^2+r^2-2rA\cos\psi )^{-\frac \alpha 2} P_r(r)
dr d\psi \nonumber .
\end{eqnarray}

With the far field assumption, we have
\begin{equation}
\int _0 ^{2\pi} (A^2+r^2-2rA\cos \psi)^{-\frac \alpha 2} d\psi
\approx A^{-\alpha}.
\end{equation}
The average energy gain is approximated by
\begin{equation}
D^{CT}_{av}\approx 1+(N-1) \frac {2} {R^2} \int _0^R r P_r(r) dr .
\end{equation}
With the assumption of sufficiently good channels between sources
and relays, we have the following approximation of (\ref{P_suc}):
\begin{equation}
P_r(r)\approx (1-\frac {\sigma^2 r ^\alpha}{4P})^L.
\end{equation}
Since
\begin{equation}
\int _0^R r (1-\frac {\sigma^2 r ^\alpha}{4P})^L dr =\frac 1 2
R^2\ _2 F_1\left(\frac 2 \alpha, -L; \frac {\alpha+2}\alpha ;
\frac {\sigma^2R^\alpha}{4P}\right),
\end{equation}
we can obtain (\ref{approx_CC}).
\end{proof}

In Figure \ref{CC_compare}, we compare the numerical and
analytical results of $D_{av}$ for different radii $R$. Here
$A=1000$m, $P=10$dbm, $\sigma^2=-70$dbm, $\alpha=4$, $L=100$ and
$N=10$. We can see that the numerical result fits the analysis
very well, which suggests that the approximation in
(\ref{approx_CC}) is a good one.

\begin{figure}[htbp]
\begin{center}
    \epsfig{file=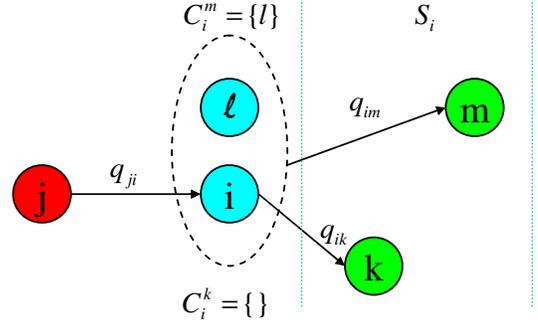,width=70truemm}
\end{center}
\caption{CB/CT Routing Model}\label{routing_mod}\vspace{-5mm}
\end{figure}

\section{CB/CT Lifetime Maximization\label{sec:prob_def}}

In this section, we first define the lifetime of sensor networks
and formulate the corresponding optimization problem. Then, by using a 2D disk
case, we demonstrate analytically the effectiveness of lifetime
saving using CB/CT. Finally, two algorithms are proposed for
general network configurations.

\subsection{Problem Formulation}

In Figure \ref{routing_mod}, we show the routing model with CB/CT.
A wireless sensor network can be modeled as a directed graph
$G(M,\mathbb{A})$, where $M$ is the set of all nodes and
$\mathbb{A}$ is the set of all links $(i,j), i,j\in M$. Here the
link can be either a direct transmission link or a link with
CB/CT. Let $S_i$ be the set of nodes that the $i^{th}$ node can
reach by direct transmission. Denote by $C_i^m$ the set of nodes
that node $i$ needs to apply CB/CT with in order to reach node $m$. In the example in
Figure \ref{routing_mod}, $C_i^m=\{ l\}$ and $C_i^k=\{ \}$. A set
of origin nodes $O$ where information is generated at node $i$
with rate $Q_i$ can be written as:
\begin{equation}
O=\{ i | Q_i>0,\ i\in M\}.
\end{equation}
A set of destination nodes is defined as $D$ where
\begin{equation}
D=\{ i | Q_i<0,\ i\in M\}.
\end{equation}

Define $\textbf q=\{q_{ij}\}$ to represent the routing and the
transmission rate. There are many types of definitions for
lifetime of sensor networks. The most common ones are the first
node failure, the average lifetime, and $\alpha$ lifetime. In this
paper, we use the lifetime until first node failure as an example.
Other types of lifetime can be examined in a similar way. Suppose
node $i$ has remaining energy of $E_i$. The lifetime for each node
can be written as:\vspace{-1mm}
\begin{equation}
T_i(\textbf q)=\frac {E_i}{\sum_{j\in S_i} q_{ij}+\sum _{i\in
C_j^m,\forall j,m} q_{jm}},\vspace{-1mm}
\end{equation}
where the first term in the denominator is for direct transmission
and the second term in the denominator is for CB/CT. Notice that
$C_j^m$ is not a function of $\textbf q$. We formulate the problem
as\vspace{-1mm}
\begin{equation}\label{Prob_def}
\max_{\textbf q}\ \min T_i
\end{equation}
\[
\mbox {s.t. }\left\{
\begin{array}{l}
q_{ij}\geq 0, \forall i,j\\
\sum_{j,i\in S_j} q_{ij} + Q_i=\sum_{k\in S_i}q_{ik},\\
\end{array}
\right.
\]
where the second constraint is for flow conservation.\vspace{-3mm}

\begin{figure}[htbp]
\begin{center}
    \epsfig{file=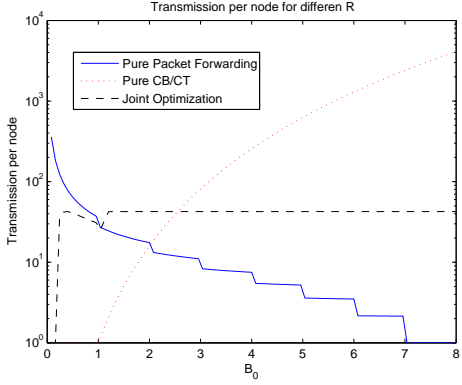,width=70truemm}
\end{center}
\caption{Analytical Results for the 2D Disk
Case}\label{analysis_fig}\vspace{-5mm}
\end{figure}

\subsection{2D Disk Case Analysis}\vspace{-1mm}

In this subsection, we study a 2D disk case network. Users with
the same remaining energy are uniformly located within a circle of
radius $B_0$. One sink is located at the center location $(0,0)$.
Each node has a unit amount of information to transmit. Here we
assume the user density is large enough, so that each node can
find enough nearby nodes to form CB/CT to reach the faraway node.

For traditional packet forwarding without CB/CT, the number of
packets needing transmission for each node at the distance $B$ to
the sink is given by\vspace{-2mm}
\begin{equation}
N_{pf}(B)=\sum_{n=0}^{\lfloor\frac {B_0-B}{A_0}\rfloor}
\left(1+\frac {nA_0}{B}\right).\vspace{-1mm}
\end{equation}
where $A_0$ is the maximal distance over which a minimal link
quality $\gamma_0$ can be maintained, i.e. $\Gamma(A_0)=\gamma_0$.

If all nodes use their neighbor nodes to communicate with the sink
directly, we call this scheme pure CB/CT. To achieve the range of
$B$, we need $N_{CB/CT}(B)$ for CB/CT, i.e.,\vspace{-1mm}
\begin{equation}
D_{av}\left(\sqrt{\frac
{N_{CB/CT}(B)}{\rho\pi}}\right)=\left(\max(\frac B
{A_0},1)\right)^\alpha.\vspace{-1mm}
\end{equation}

For collaborative beamforming, we can calculate\vspace{-1mm}
\begin{equation}
N_{CB}(B)\geq \frac 1 2
\left(c_0(2+c_0c_1^2)+c_0^{1.5}c_1\sqrt{4+c_0c_1^2}\right)\vspace{-1mm}
\end{equation}
where $c_0=(\max(\frac B {A_0},1))^\alpha$ and $c_1=\mu \lambda
\sqrt{\rho \pi}$. For cooperative transmission, numerical results
need to be used to obtain the inverse of $_2F_1$ in Theorem 1.
Notice that if the node density is large enough, then
$D_{av}/N\longrightarrow 1$.

In Figure \ref{analysis_fig}, we show the average transmission per
node vs. the disk size $B_0$. We can see that for traditional
packet forwarding, the node closest to the sink has the most
transmissions per node, i.e., it has the lowest lifetime if the
initial energy is the same for all nodes. On the other hand, for
the pure CB/CT scheme, more nodes need to transmit to reach the
sink directly when $B_0$ is larger. The transmission is less
efficient than packet forwarding, since the propagation loss
factor $\alpha$ is larger than $1$. The above facts motivate the
joint optimization case where nodes transmit packets with
different probabilities over traditional packet forwarding and
CB/CT.

For traditional packet forwarding, nodes near the sink have lower
lifetimes. If the faraway nodes can form CB/CT to transmit directly
to the sink and bypass these life depleting nodes, the overall
network lifetime can be improved. Notice that in this special
case, if the faraway nodes form CB/CT to transmit to nodes other
than the sink, the lifetime will not be improved. For each node
with distance $B$ to the sink, and supposing the probability of using
CB/CT is $P_r(B)$, we have

\begin{eqnarray}\label{N_joint}
N_{joint}(B)= (1-P_r(B)+N_{CB/CT}(B)P_r(B))\nonumber
\\\sum_{n=0}^{\lfloor\frac {B_0-B}{A_0}\rfloor}\left(1+\frac {
nA_0}{B}\right)\Pi _{j=1}^n(1-P_r(B+jA_0)),\vspace{-1mm}
\end{eqnarray}
where the first term on the right-hand side (RHS) is the necessary
energy for transmitting one packet, and the second term is the
number of packets for transmission. The goal is to adjust $P_r(B)$
such that the lifetime is maximized, i.e.,
\begin{equation}\label{bisection_prob}
\min_{1 \geq P_r(B)\geq 0} \max N_{joint}(B).\vspace{-1mm}
\end{equation}

Notice that $N_{CB/CT}\geq 1$, and in (\ref{N_joint}) the second
term on the RHS depends on the probabilities of CB/CT being larger
than $B$. So we can develop an efficient bisection search method
to calculate (\ref{bisection_prob}). We define a temperature
$\kappa$ that is assumed to be equal or greater than
$N_{joint}(B),\forall B$. We can first calculate $N_{joint}(B)$
from the boundary of the network where the second term on the RHS
of (\ref{N_joint}) is one. Then we can derive all $N_{joint}(B)$
by reducing $B$. If $\kappa$ is too large, most information is
transmitted by CB/CT, and the nodes faraway from the sink waste
too much power for CB/CT; on the other hand, if $\kappa$ is too
small, the nodes close to the sink must forward too many packets.
A bisection search method can find the optimal values of $\kappa$
and $N_{joint}(B)$.

\begin{table}
\caption{Lifetime Saving vs. Disk Size}
\begin{center}
\begin{tabular}{|c|c|c|c|c|c|}
  \hline
  $R_0$ & 2 & 4 & 6 & 8 & 10 \\
  \hline
  $\max N_{joint}(B)$ & 2.82 & 10.25 & 23.4 & 42.5 & 64.5 \\

  \hline

  Saving \% & 94.56 & 93.33 & 90.86 & 88.13 & 85.98 \\
  \hline
\end{tabular}
\end{center}\label{ana_table}\vspace{-10mm}
\end{table}

In Figure \ref{analysis_fig}, we show the joint optimization case
where the node density is sufficiently large. We can see that to
reduce the packet forwarding burdens of the nodes near the sink,
the faraway nodes form CB/CT to transmit to the sink directly.
This will increase the number of transmissions per node for them,
but reduce the transmissions per node for the nodes near the sink.
In Table \ref{ana_table}, we show the maximal $N_{joint}(B)$ and
the lifetime saving over the traditional packet forwarding. We can
see that the power saving is around 90\%.\vspace{-3mm}

\subsection{General Case Algorithms\label{sec:algorithm}}

In this section, we first consider the case in which the
information generation rates are fixed for all sensors, and
develop a linear programming method to calculate the routing
table. Here to simplify the calculation of set $C_j^m$, we assume
its size equals  one. Obviously, this is suboptimal for
(\ref{Prob_def}). Then we select the nearest neighbor for CB/CT.
$C_j^m=1$, if node $j$'s nearest neighbor can help node $j$ to
reach node $m$. Define $\hat q_{ij}=Tq_{ij}$. The problem can be
written as a linear programming problem:\vspace{-1mm}
\begin{equation}
\max \ T
\end{equation}
\[
\mbox{s.t. } \left\{
\begin{array}{l}
\hat q_{ij}\geq 0, \forall i,j;\\
(\sum_{j\in S_i}\hat q_{ij}+\sum _{i\in C_j^m,\forall j,m}\hat q_{jm})\leq E_i, \ \forall i;\\
\sum_{j,i\in S_j}\hat q_{ji} + TQ_i=\sum_{k\in S_i}\hat
q_{ik},\forall i \in M-D,
\end{array}
\right.\vspace{-1mm}
\]
where the second constraint is the energy constraint and the third
constraint is for flow conservation.

\begin{figure}[htbp]
\begin{center}
    \epsfig{file=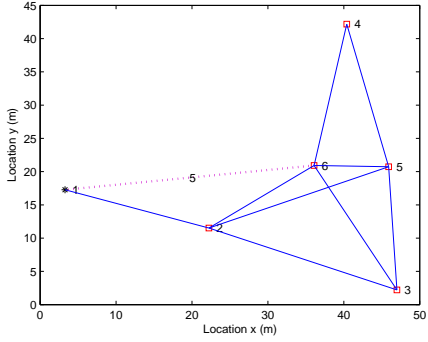,width=65truemm}
\end{center}
\caption{Snapshot of CB/CT
Routing}\label{sim_example}\vspace{-5mm}
\end{figure}

Next, if the information rate is random, each sensor dynamically
updates its cost according to its remaining energy and with
consideration of CB/CT. Some heuristic algorithms can be proposed
to update the link cost dynamically. Here the initial energy is
$E_i$. Define the current remaining energy as $\underline{E}_i$.
We define the cost for node $i$ to communicate with node $j$ as

\begin{equation}\label{link_cost}
\mbox{cost}_{ij}=\left(\frac{E_i}{\underline{E}_i}\right)^{\beta_1}+
\sum _{l\in C_i^j} \left(
\frac{E_l}{\underline{E}_l}\right)^{\beta_2},
\end{equation}
where $\beta_1$ and $\beta_2$ are positive constants. Their values
determine how the packets are allocated between the energy
sufficient and energy depleting nodes, and between the direct
transmission and CB/CT. Notice that (\ref{link_cost}) can be
viewed as an inverse barrier function for $\underline{E}_i\geq 0$.

%
%
%
%
%
%
%
%
%
%
%

\vspace{-3mm}

\section{Simulation Results}\label{sec:simulation}\vspace{-1mm}

We assume nodes and one sink are randomly located within a square
of size $\mathbb{L}\times \mathbb{L}$. Each node has power of
10dbm and the noise level is -70dbm. The propagation loss factor
is $4$. The minimal link SNR is 10dB. The initial energy of all
users is assumed to be unit and information rates for all users
are 1.

In Figure \ref{sim_example}, we show a snapshot of a network of
$5$ sensor nodes and a sink with $\mathbb{L}=50$m. Here node $1$
is the sink. The solid lines are the links for the direct
transmission, and the dotted line from node $6$ to the sink is the
CB/CT link with the help of node $5$. For traditional direct
packet forwarding scheme, the best flow is\vspace{-1mm}
\begin{equation}
\hat q_{ij}=\left[
\begin{array}{cccccc}
  0 & 0 & 0 & 0 & 0 & 0 \\
  1 & 0 & 0 & 0 & 0 & 0 \\
  0 & 0.2 & 0 & 0 & 0 & 0 \\
  0 & 0 & 0 & 0 & 0.1 & 0.1 \\
  0 & 0.3 & 0 & 0 & 0 & 0 \\
  0 & 0.3 & 0 & 0 & 0 & 0 \\
\end{array}
\right]\vspace{-1mm}
\end{equation}
with the resulting energy consumed for all nodes given by $[0,
1.0,0.2,0.2,0.3,0.3]$. Because node $2$ is the only node that can
communicate with the sink, the best lifetime of this routing is
$0.2$ before node $2$ runs out of energy.
\begin{figure}[htbp]
\begin{center}
    \epsfig{file=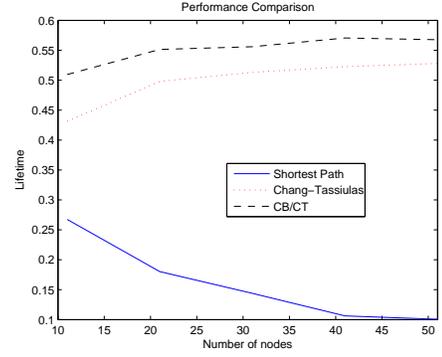,width=65truemm}
\end{center}
\caption{Lifetime Improvement Comparison}
\label{life_no_user}\vspace{-5mm}
\end{figure}

With CB/CT, the best flow is

\begin{equation}
\hat q_{ij}=\left[
\begin{array}{cccccc}
  0 & 0 & 0 & 0 & 0 & 0 \\
  1 & 0 & 0 & 0 & 0 & 0 \\
  0 & 0.321 & 0 & 0 & 0 & 0.012 \\
  0 & 0 & 0 & 0 & 0 & 0.333 \\
  0 & 0.23 & 0 & 0 & 0 & 0.103 \\
  0.667 & 0.115 & 0 & 0 & 0 & 0 \\
\end{array}
\right]
\end{equation}
with the energy consumed for all nodes given by $[0, 1.000, 0.333,
0.333, 1.0000, 0.782]$. Here some flow can be sent to the sink via
node $6$. Because of CB/CT, node $5$ has to consume its power. The
lifetime becomes $0.333$ which is 67\% improvement over direct
packet forwarding.

In Figure \ref{life_no_user}, we compare the performance of three
algorithms, the shortest path, the algorithm in \cite{Tassiulas},
and the proposed CB/CT algorithm. Here $\mathbb{L}=100$m. As the
number of users increases, the performance of the shortest path
algorithm decreases. This is because more users will need packet
forwarding by the nodes near the sink. Consequently, they die more
quickly. Compared with the algorithm in \cite{Tassiulas}, the
proposed schemes have about 10\% performance improvement. This is
because of the alternative routes to the sink that can be found by
CB/CT.\vspace{-3mm}

%
%
%

\section{Conclusions}\label{sec:conclusion}

In this paper, we have studied the impact of CB/CT on the design
of higher level routing protocols. Specifically, using CB/CT, we
have proposed a new idea based on bypassing energy depleting nodes
that might otherwise forward packets to the sink, in order to
improve the lifetime of wireless sensor networks. From the
analytical and simulation results, we have seen that the proposed
protocols can increase lifetime by about 90\% in a 2D disk case
and about 10\% in  general network situations, compared with
existing techniques.\vspace{-3mm}

\bibliographystyle{IEEE}

\end{document}